
\documentclass[suppldata]{interact}

\usepackage{epstopdf}
\usepackage[caption=false]{subfig}

\usepackage[numbers,sort&compress]{natbib}
\bibpunct[, ]{[}{]}{,}{n}{,}{,}
\makeatletter
\def\NAT@def@citea{\def\@citea{\NAT@separator}}
\makeatother

\theoremstyle{plain}
\newtheorem{theorem}{Theorem}[section]
\newtheorem{lemma}[theorem]{Lemma}
\newtheorem{corollary}[theorem]{Corollary}
\newtheorem{proposition}[theorem]{Proposition}

\theoremstyle{definition}
\newtheorem{definition}[theorem]{Definition}

\theoremstyle{remark}
\newtheorem{remark}{Remark}


\usepackage{amssymb}
\usepackage{verbatim}
\usepackage{blkarray}
\usepackage{multirow}
\usepackage{graphicx}
\usepackage{afterpage}
\usepackage[hidelinks]{hyperref}
\usepackage{framed}
\usepackage{amsmath}
\usepackage{systeme}


\newcommand{\ket}[1]{|#1\rangle}

\newcommand{\vp}{\varphi}

\newcommand{\bs}[1]{\{0,1\}^{#1}}
\newcommand{\bsp}{\{0,1\}}
\newcommand{\bff}[2]{f:\bs{#1}\to\bs{#2}}

\DeclareMathOperator{\GPK}{GPK}


\begin{document}

\title{A generalisation of the Phase Kick-Back}

\author{
\name{Joaqu\'in Ossorio--Castillo\textsuperscript{{$\ast$}}\thanks{\textsuperscript{$\ast$} ORCiD: 0000-0001-8592-2263}, Ulises Pastor--D\'iaz\textsuperscript{{$\dagger$}}\thanks{\textsuperscript{$\dagger$} CONTACT: Ulises Pastor--D\'iaz. Email: ulisespastordiaz@gmail.com, ORCiD: 0000-0002-0309-7173} and Jos\'e~M. Tornero\textsuperscript{{$\ddagger$}}\thanks{\textsuperscript{$\ddagger$} ORCiD: 0000-0001-5898-1049}}
\affil{\textsuperscript{$\ast,\dagger,\ddagger$} Departamento de Álgebra, Facultad de Matemáticas, Universidad de Sevilla. Avda. Reina Mercedes s/n, 41012 Sevilla (Spain).}
}

\maketitle

\begin{abstract}
In this paper, we present a generalisation of the Phase Kick-Back technique, which is central to some of the classical algorithms in quantum computing, such as the Deutsch--Jozsa algorithm, Simon's algorithm or Grover's algorithm.

We will begin by recalling the Phase Kick-Back technique to then introduce the new generalised version and analyse it. After that, we will present a new generalised version of the Deutsch--Jozsa problem and it will be solved using the previously defined technique.

Finally, we will present a generalised version of the Bernstein-Vazirani problem and solve it using this technique to better understand its inner workings.
\end{abstract}

\begin{keywords}
Quantum Algorithms; Phase Kick-Back; Deutsch--Jozsa; Bernstein--Vazirani; Boolean Functions
\end{keywords}

\begin{amscode}
68Q12 (primary); 68Q09, 81P68 (secondary).
\end{amscode}

\begin{section}{Introduction: Phase Kick-Back and notation}

Let us begin by introducing the notation we will use, which will be that of \cite{kaye,niel}. These two books, along with \cite{via,oyt} can be consulted for more context on the topic of quantum computing.

\begin{remark}

First of all, we will call the elements $\mathbf{x}\in\bs{n}$ binary strings and note them in bold, underlining their structure as vectors in the space $\mathbb{F}_2^n$. 

Let $\mathbf{y}, \mathbf{z}\in\{0,1\}^n$ be two strings, written
$$ \mathbf{y} = y_{n-1} \ldots y_1 y_0, \quad \quad \mathbf{z} = z_{n-1} \ldots z_1 z_0, $$
and let $\oplus$ denote the exclusive or addition (which is addition modulo $2$). We define the exclusive or operation for strings as the exclusive or bitwise, that is,
$$ \mathbf{y} \oplus \mathbf{z} = \left( y_{n-1} \oplus z_{n-1} \right) \ldots \left( y_1 \oplus z_1 \right) \left( y_0 \oplus z_0 \right), $$
and we will denote the pairing in $\{0,1\}^n$ (not a scalar product, though) by 
$$ \mathbf{y} \cdot \mathbf{z} = \left( y_0 \cdot z_0  \right) \oplus \ldots \oplus \left( y_{n-1}\cdot z_{n-1} \right). $$

Note that, as the xor operation is performed bitwise, we have
$$ \mathbf{x} \cdot ( \mathbf{y} \oplus \mathbf{z} ) = (\mathbf{x} \cdot \mathbf{y}) \oplus (\mathbf{x} \cdot \mathbf{z}). $$

We will also write $\mathbf{0}$ to refer to the zero $n$-string $\mathbf{0} = 00\cdots 0$.

To represent quantum states we will use the Bra-Ket or Dirac notation, where given a binary string $\mathbf{x}\in\bs{n}$ of length $n$ we represent the $n$-dimensional qubit state of the computational basis corresponding to $\mathbf{x}$ by $\ket{\mathbf{x}}_n$. For one-dimensional qubit systems, we will often simply write the ket $\ket{\mathbf{x}}$ without the subindex. If we have more than one qubit system, we will write the number of qubits of each register separated by commas. For example, in $\ket{\mathbf{x}}_{n,m,r}$ we would have three registers of $n$, $m$ and $r$ qubits respectively.

Let $R$ be an $m\times n$ Boolean matrix---i.e., a matrix whose components are either $0$s or $1$s---and let $\mathbf{r}_i$ be the binary string determined by the $i$-th file of $R$, we will define the result of the operation $R\cdot \mathbf{x}$ as the string whose $i$-th component is $\mathbf{r}_i\cdot \mathbf{x}$ (that is, the usual matrix-vector operation).

\end{remark}

We will say that a Boolean function is a function $\bff{n}{m}$. It is well known---consult \cite{via} for more information---that given a Boolean function one can construct the quantum gate $\mathbf{U}_f$ whose effect is the following:
$$
\mathbf{U}_f\Big(\ket{\mathbf{x}}_n\otimes\ket{\mathbf{y}}_m\Big) = \ket{\mathbf{x}}_n\otimes\ket{\mathbf{y}\oplus f(\mathbf{x})}_m.
$$

Let us now review the Phase Kick-Back. To do so we must recall the \textit{Hadamard basis}:
$$
\ket{+} = \frac{\ket{0}+\ket{1}}{\sqrt{2}}, \quad \ket{-} = \frac{\ket{0}-\ket{1}}{\sqrt{2}}.
$$

\begin{lemma}
Let $\bff{n}{}$ be a Boolean function, and let $\mathbf{U}_f$ be the quantum gate that computes it. Then, in the $n+1$ qubit system, vectors of the form $\ket{\mathbf{x}}_n\otimes\ket{-}$ are eigenvectors with eigenvalue $(-1)^{f(\mathbf{x})}$ for every $\mathbf{x}\in\bs{n}.$

\end{lemma}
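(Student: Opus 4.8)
The plan is simply to apply $\mathbf{U}_f$ to a vector of the claimed form and simplify, using only the linearity of the gate together with its defining action $\mathbf{U}_f\big(\ket{\mathbf{x}}_n\otimes\ket{\mathbf{y}}_m\big) = \ket{\mathbf{x}}_n\otimes\ket{\mathbf{y}\oplus f(\mathbf{x})}_m$ recalled above, specialised to $m=1$.

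First I would write $\ket{-} = \frac{1}{\sqrt{2}}\big(\ket{0}-\ket{1}\big)$ and push $\mathbf{U}_f$ through the difference, which gives
$$\mathbf{U}_f\big(\ket{\mathbf{x}}_n\otimes\ket{-}\big) = \frac{1}{\sqrt{2}}\Big(\ket{\mathbf{x}}_n\otimes\ket{0\oplus f(\mathbf{x})} - \ket{\mathbf{x}}_n\otimes\ket{1\oplus f(\mathbf{x})}\Big).$$
Then I would split into the two cases $f(\mathbf{x})=0$ and $f(\mathbf{x})=1$. In the first case the bracketed register collapses to $\ket{0}-\ket{1}$, recovering $\ket{\mathbf{x}}_n\otimes\ket{-}$; in the second it becomes $\ket{1}-\ket{0} = -\big(\ket{0}-\ket{1}\big)$, producing $-\ket{\mathbf{x}}_n\otimes\ket{-}$. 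Both outcomes are captured by the single formula $\mathbf{U}_f\big(\ket{\mathbf{x}}_n\otimes\ket{-}\big) = (-1)^{f(\mathbf{x})}\ket{\mathbf{x}}_n\otimes\ket{-}$, which is precisely the claim.

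I do not expect a genuine obstacle: the argument is a one-line computation once the $\ket{-}$ state is expanded. The only points worth stating carefully are that $0\oplus f(\mathbf{x}) = f(\mathbf{x})$ and that $1\oplus f(\mathbf{x})$ is the complementary bit, so that the two kets appearing in the difference are always $\ket{0}$ and $\ket{1}$ in some order; and that the scalar $(-1)^{f(\mathbf{x})}$ genuinely depends on $\mathbf{x}$, so the lemma exhibits a family of eigenvectors whose eigenvalues vary with $\mathbf{x}$ — exactly the feature that the Phase Kick-Back technique exploits.
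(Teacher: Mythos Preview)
Your proof is correct and follows essentially the same route as the paper: expand $\ket{-}$, apply the defining action of $\mathbf{U}_f$, and split into the two cases $f(\mathbf{x})=0$ and $f(\mathbf{x})=1$. The only cosmetic difference is that you first distribute $\mathbf{U}_f$ over the two terms of $\ket{-}$, whereas the paper writes the resulting second register directly as $\frac{1}{\sqrt{2}}\big(\ket{f(\mathbf{x})}-\ket{f(\mathbf{x})\oplus 1}\big)$ before casing.
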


\begin{proof}

To prove this result we must expand the following expression:
$$
\mathbf{U}_f\left(\ket{\mathbf{x}}_n\otimes\ket{-}\right) = \ket{\mathbf{x}}_n\otimes\left(\frac{\ket{f(\mathbf{x})}-\ket{f(\mathbf{x})\oplus 1}}{\sqrt{2}}\right).
$$

If $f(\mathbf{x}) = 0$, then the state does not change and we have:
$$
\ket{\mathbf{x}}_n\otimes\left(\frac{\ket{0}-\ket{1}}{\sqrt{2}}\right) = \ket{\mathbf{x}}_n\otimes\ket{-}.
$$

If $f(\mathbf{x}) = 1$, then:
$$
\ket{\mathbf{x}}_n\otimes\left(\frac{\ket{1}-\ket{0}}{\sqrt{2}}\right) = (-1)\left(\ket{\mathbf{x}_n}\otimes\ket{-}\right).
$$
\end{proof}

The \textit{Phase Kick-Back} technique is almost always used to mark the amplitudes of the states of the computational basis whose image through $f$ is $1$. In that sense, we would have
$$
\Big(\mathbf{H}_n\ket{\mathbf{0}}_n\Big)\otimes\ket{-} = \left(\frac{1}{\sqrt{2^n}}\sum_{\mathbf{x}\in\bs{n}} \ket{\mathbf{x}}_n\right)\otimes\ket{-},
$$
where $\mathbf{H}_n$ is the Hadamard matrix of dimension $n$, which can be defined as:
$$
\mathbf{H}_n = \mathbf{H}^{\otimes n}, \, \text{ where } \mathbf{H} = \frac{1}{\sqrt{2}} \left( \begin{array}{rr} 1 & 1 \\ 1 & -1 \end{array} \right),
$$
and whose effect on an element of the computational basis $\mathbf{x}\in\bs{n}$ is the following:
$$
\mathbf{H}_n\ket{\mathbf{x}}_n = \frac{1}{\sqrt{2^n}} \sum_{\mathbf{z}\in\bs{n}} (-1)^{\mathbf{x}\cdot\mathbf{z}} \ket{\mathbf{z}}_n.
$$

This can be easily proven by induction. In particular, when $\mathbf{x} = \mathbf{0}$, we would have:
$$
\mathbf{H}_n\ket{\mathbf{0}}_n = \frac{1}{\sqrt{2^n}} \sum_{\mathbf{z}\in\bs{n}} \ket{\mathbf{z}}_n.
$$

Summarising, we would have a summation over all the states of the computational basis, all of them with the same amplitude in the first $n$-qubit register. The idea of the Phase Kick-Back is to apply $\mathbf{U}_f$ to this state and mark the aforementioned elements with a negative amplitude.
$$
\mathbf{U}_f\left(\frac{1}{\sqrt{2^n}}\sum_{\mathbf{x}\in\bs{n}} \ket{\mathbf{x}}_n\otimes\ket{-}\right) = \left(\frac{1}{\sqrt{2^n}}\sum_{\mathbf{x}\in\bs{n}} (-1)^{f(\mathbf{x})}\ket{\mathbf{x}}_n\right)\otimes\ket{-}.
$$

\begin{remark}{(Deutsch--Jozsa algorithm.)}
Let us recall the Deutsch--Jozsa algorithm as an example of usage of this technique, which was presented in \cite{dyj} but can be reviewed in any of the manuals of quantum computing already presented.

Deutsch's problem is defined as follows. We have an unknown Boolean function $\bff{n}{}$ which can either be constant---that is, $f(\mathbf{x})$ is the same for every $\mathbf{x}\in\bs{n}$---or balanced, which means that for half of the values $\mathbf{x}\in\bs{n}$ we have $f(\mathbf{x}) = 0$ and for the other half $f(\mathbf{x}) = 1$.

Deutsch's problem consists in determining whether $f$ is constant or balanced using the function as a black box. In terms of time efficiency, this problem is actually quite hard to solve in the classical deterministic situation, as it would take $2^{n-1}+1$ evaluations of $f$ to solve in the worst scenario, but the Deutsch--Jozsa algorithm solves it with certainty with a single call to $\mathbf{U}_f$. 

Let us review the role of the Phase Kick-Back in this algorithm. 

Starting with the state $\ket{\vp_0}_{n,1} = \ket{\mathbf{0}}_n\otimes\ket{0}$, we apply the Pauli $\mathbf{X}$ gate to the second register, obtaining $\ket{\vp_1}_{n,1} = \ket{\mathbf{0}}_n\otimes\ket{1}$. Then, after applying Hadamard gates we get the state already introduced given by $\ket{\vp_2}_{n,1}= \Big(\mathbf{H}_n\ket{\mathbf{0}}_n\Big)\otimes\ket{-}$, and apply $\mathbf{U}_f$. Thus, we make use of the Phase Kick-Back technique:
$$
\ket{\vp_3}_{n,1} = \mathbf{U}_f\ket{\vp_2}_{n,1} = \mathbf{U}_f\left(\frac{1}{\sqrt{2^n}}\sum_{\mathbf{x}\in\bs{n}} \ket{\mathbf{x}}_n\otimes\ket{-}\right),
$$
and we end up getting:
$$
\left(\frac{1}{\sqrt{2^n}}\sum_{\mathbf{x}\in\bs{n}} (-1)^{f(\mathbf{x})}\ket{\mathbf{x}}_n\right)\otimes\ket{-}.
$$

Now, if $f$ were constant, then we would actually have gotten $\Big(\mathbf{H}_n\ket{\mathbf{0}}_n\Big)\otimes\ket{-}$ again, so after applying $\mathbf{H}_n$ to the first $n$-qubit register we would get $\ket{\mathbf{0}}_n$. Let us do the calculations.
$$
\ket{\vp_4}_n = \mathbf{H}_n\left(\frac{1}{\sqrt{2^n}}\sum_{\mathbf{x}\in\bs{n}} (-1)^{f(\mathbf{x})}\ket{\mathbf{x}}_n\right) = \frac{1}{\sqrt{2^n}}\sum_{\mathbf{x}\in\bs{n}} (-1)^{f(\mathbf{x})}\mathbf{H}_n\ket{\mathbf{x}}_n.
$$

And using that:
$$
\mathbf{H}_n\ket{\mathbf{x}}_n = \frac{1}{\sqrt{2^n}}\sum_{\mathbf{z}\in\bs{n}} (-1)^{\mathbf{x}\cdot\mathbf{z}}\ket{\mathbf{z}}_n
$$

We can expand the previous expression:
$$
\ket{\vp_4}_n = \frac{1}{\sqrt{2^n}}\sum_{\mathbf{x}\in\bs{n}} (-1)^{f(\mathbf{x})}\left(\frac{1}{\sqrt{2^n}}\sum_{\mathbf{z}\in\bs{n}} (-1)^{\mathbf{x}\cdot\mathbf{z}}\ket{\mathbf{z}}_n \right)
$$
$$
= \frac{1}{2^n} \sum_{\mathbf{z}\in\bs{n}}\left[\sum_{\mathbf{x}\in\bs{n}} (-1)^{\mathbf{x}\cdot\mathbf{z}\oplus f(\mathbf{x})}\right]\ket{\mathbf{z}}_n.
$$

Here, the amplitude of a state of the computational basis $\ket{\mathbf{z}}_n$ is given by:
$$
\frac{1}{2^n}\sum_{\mathbf{x}\in\bs{n}} (-1)^{\mathbf{x}\cdot\mathbf{z}\oplus f(\mathbf{x})}.
$$

In particular, if $\ket{\mathbf{z}}_n = \ket{\mathbf{0}}_n$, then we would get the amplitude:
$$
\frac{1}{2^n}\sum_{\mathbf{x}\in\bs{n}} (-1)^{f(\mathbf{x})}.
$$

That is, $0$ when the function is balanced and $1$ when it is constant. Thus, we end up with the state $\ket{\mathbf{0}}_n$ if the function is constant and a combination of the rest of states in the computational basis if the function is balanced. If we finish by measuring this register, we will get $\mathbf{0}$ if the function is constant and any other result if instead it is balanced.
\end{remark}

\end{section}

\begin{section}{Generalised Phase Kick-Back}

Let us now present a generalisation of the Phase Kick-Back idea. This generalisation will consist on the expansion of the technique to general Boolean functions $\bff{n}{m}$, where the target qubit---the one in the second register of the Deutsch--Jozsa algorithm---becomes a register of $m$ qubits.

During this generalisation we will take $\mathbf{U}_f$ as presented before and we will notate the states given by $\mathbf{H}_n\ket{\mathbf{y}}_n$ as $\ket{\gamma_{\mathbf{y}}}_n$, where $\ket{\mathbf{y}}_n$ are the elements of the computational basis.

Let us begin by presenting an analogous version to that of Lemma $1.1$, which will constitute the core idea of this technique.

\begin{lemma}
Let $\ket{\gamma_{\mathbf{y}}}_m = \mathbf{H}_m\ket{\mathbf{y}}_m$ with $\mathbf{y} \in\{0,1\}^m$. Then, for each $\mathbf{x} \in \{0,1\}^n$, the vector $\ket{\mathbf{x}}_n\otimes\ket{\gamma_{\mathbf{y}}}_m$ is an eigenvector of $\mathbf{U}_f$ with eigenvalue $(-1)^{\mathbf{y}\cdot f(\mathbf{x})}.$ 
\end{lemma}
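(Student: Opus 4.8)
The plan is to compute the action of $\mathbf{U}_f$ directly on the product state $\ket{\mathbf{x}}_n \otimes \ket{\gamma_{\mathbf{y}}}_m$ by expanding $\ket{\gamma_{\mathbf{y}}}_m = \mathbf{H}_m\ket{\mathbf{y}}_m$ in the computational basis. Using the formula for the Hadamard transform recalled in the excerpt, we have $\ket{\gamma_{\mathbf{y}}}_m = \frac{1}{\sqrt{2^m}}\sum_{\mathbf{z}\in\bs{m}}(-1)^{\mathbf{y}\cdot\mathbf{z}}\ket{\mathbf{z}}_m$. First I would apply $\mathbf{U}_f$ term by term, using its defining action $\mathbf{U}_f(\ket{\mathbf{x}}_n\otimes\ket{\mathbf{z}}_m) = \ket{\mathbf{x}}_n\otimes\ket{\mathbf{z}\oplus f(\mathbf{x})}_m$, to get
$$
\mathbf{U}_f\big(\ket{\mathbf{x}}_n\otimes\ket{\gamma_{\mathbf{y}}}_m\big) = \ket{\mathbf{x}}_n\otimes\left(\frac{1}{\sqrt{2^m}}\sum_{\mathbf{z}\in\bs{m}}(-1)^{\mathbf{y}\cdot\mathbf{z}}\ket{\mathbf{z}\oplus f(\mathbf{x})}_m\right).
$$

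Next I would reindex the inner sum by the substitution $\mathbf{w} = \mathbf{z}\oplus f(\mathbf{x})$, which is a bijection of $\bs{m}$ onto itself, so that $\mathbf{z} = \mathbf{w}\oplus f(\mathbf{x})$ and the phase becomes $(-1)^{\mathbf{y}\cdot(\mathbf{w}\oplus f(\mathbf{x}))}$. Here the key algebraic fact is the bilinearity of the pairing recalled in the opening remark, namely $\mathbf{y}\cdot(\mathbf{w}\oplus f(\mathbf{x})) = (\mathbf{y}\cdot\mathbf{w})\oplus(\mathbf{y}\cdot f(\mathbf{x}))$, which lets me factor $(-1)^{\mathbf{y}\cdot(\mathbf{w}\oplus f(\mathbf{x}))} = (-1)^{\mathbf{y}\cdot f(\mathbf{x})}(-1)^{\mathbf{y}\cdot\mathbf{w}}$. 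Pulling the constant $(-1)^{\mathbf{y}\cdot f(\mathbf{x})}$ out of the sum, the remaining sum $\frac{1}{\sqrt{2^m}}\sum_{\mathbf{w}\in\bs{m}}(-1)^{\mathbf{y}\cdot\mathbf{w}}\ket{\mathbf{w}}_m$ is exactly $\ket{\gamma_{\mathbf{y}}}_m$ again, giving
$$
\mathbf{U}_f\big(\ket{\mathbf{x}}_n\otimes\ket{\gamma_{\mathbf{y}}}_m\big) = (-1)^{\mathbf{y}\cdot f(\mathbf{x})}\,\ket{\mathbf{x}}_n\otimes\ket{\gamma_{\mathbf{y}}}_m,
$$
which is precisely the claim.

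There is no real obstacle here: the argument is a one-line index shift once the pairing is known to be bilinear over $\mathbb{F}_2$, and the only point requiring a moment's care is making sure that the map $\mathbf{z}\mapsto\mathbf{z}\oplus f(\mathbf{x})$ is a bijection (it is an involution, being its own inverse) so that the reindexed sum genuinely runs over all of $\bs{m}$. It is worth noting that this recovers Lemma~1.1 as the special case $m=1$, $\mathbf{y}=1$, since then $\ket{\gamma_1}_1 = \ket{-}$ and $\mathbf{y}\cdot f(\mathbf{x}) = f(\mathbf{x})$.
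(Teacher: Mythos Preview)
Your proposal is correct and follows essentially the same approach as the paper: expand $\ket{\gamma_{\mathbf{y}}}_m$ in the computational basis, apply $\mathbf{U}_f$ termwise, use bilinearity of the pairing to factor out $(-1)^{\mathbf{y}\cdot f(\mathbf{x})}$, and observe that $\mathbf{z}\mapsto\mathbf{z}\oplus f(\mathbf{x})$ is a bijection so the remaining sum is $\ket{\gamma_{\mathbf{y}}}_m$ again. The only cosmetic difference is that you reindex explicitly via $\mathbf{w}=\mathbf{z}\oplus f(\mathbf{x})$ before factoring, whereas the paper factors first and reindexes implicitly at the end.
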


\begin{proof}

We know that
$$
\ket{\gamma_{\mathbf{y}}}_m = \frac{1}{\sqrt{2^m}} \sum_{\mathbf{z}\in\{0,1\}^m} (-1)^{\mathbf{y}\cdot \mathbf{z}}\ket{\mathbf{z}}_m.
$$

If we now apply $\mathbf{U}_f$ to $\ket{\mathbf{x}}_n\otimes\ket{\gamma_{\mathbf{y}}}_m$, we get the following:
$$
\mathbf{U}_f\Big( \ket{\mathbf{x}}_n\otimes\ket{\gamma_{\mathbf{y}}}_m\Big) = \ket{\mathbf{x}}_n\otimes\left(\frac{1}{\sqrt{2^m}} \sum_{\mathbf{z}\in\{0,1\}^m} (-1)^{\mathbf{y}\cdot \mathbf{z}}\ket{\mathbf{z}\oplus f(\mathbf{x})}_m \right).
$$

And doing some manipulation, this expression becomes:
$$
= (-1)^{\mathbf{y}\cdot f(\mathbf{x})}\ket{\mathbf{x}}_n\otimes\left(\frac{1}{\sqrt{2^m}} \sum_{\mathbf{z}\in\{0,1\}^m} (-1)^{\mathbf{y}\cdot \mathbf{z}\oplus \mathbf{y}\cdot f(\mathbf{x})}\ket{\mathbf{z}\oplus f(\mathbf{x})}_m \right)
$$
$$
= (-1)^{\mathbf{y}\cdot f(\mathbf{x})}\ket{\mathbf{x}}_n\otimes\left(\frac{1}{\sqrt{2^m}} \sum_{\mathbf{z}\in\{0,1\}^m} (-1)^{\mathbf{y}\cdot (\mathbf{z}\oplus f(\mathbf{x}))}\ket{\mathbf{z}\oplus f(\mathbf{x})}_m \right)
$$
$$
= (-1)^{\mathbf{y}\cdot f(x)}\ket{\mathbf{\mathbf{x}}}\otimes\ket{\gamma_{\mathbf{y}}}_m,
$$
as for a fixed $f(\mathbf{x})$, $\ket{\mathbf{z}\oplus f(\mathbf{x})}_m$ runs through all of $\{0,1\}^m$ just as $\mathbf{z}$ does.
\end{proof}

As we can see, it is a completely analogous idea to the previous one, with the difference that we can now choose a {\em marker}, $\mathbf{y}\in\bs{n}$, which will work as a fixed reference and multiply each $f(\mathbf{x})$. Let us take a look at the inner work of this idea using an example.

We will consider the Boolean function $f:\{0,1\}^3 \to \{0,1\}^2$ which will eliminate the last bit, that is, $f(xyz) = xy$ where $x,y,z\in\{0,1\}$ and $xyz$ stands for the concatenation of bits.

To use our new tool, we will need a $5$-qubit system divided into a $3$-qubit register and a $2$-qubit register, both of them starting on $\ket{\mathbf{0}}$:
$$
\ket{\varphi_0}_5 = \ket{\mathbf{0}}_3\otimes\ket{\mathbf{0}}_2.
$$

We will begin by choosing a marker, i.e., the $\mathbf{y}\in\{0,1\}^2$ that will encode the information we want to look for in $f$. In this case, we will take $\mathbf{y} = 01$, that is, we will mark those values whose image through $f$ is $01$ or $11$. To do so, we will begin by preparing the second register to $\mathbf{y}$, which is easily achieved by applying the Pauli $\mathbf{X}$ gate on the last qubit.

$$\ket{\varphi_1}_5 = \left(\mathbf{I}^{\otimes 4}\otimes \mathbf{X}\right)\ket{\varphi_0}_5 = \ket{\mathbf{0}}_3\otimes\ket{01}_2.$$

Once we have prepared our basic state, we will apply Hadamard gates to all qubits to obtain a superposition state.
$$
\ket{\varphi_2}_5 = \mathbf{H}_5\ket{\varphi_1}_5 = \left(\frac{1}{\sqrt{8}}\sum_{\mathbf{x}\in\{0,1\}^3}\ket{\mathbf{x}}_3\right)\otimes\ket{\gamma_{01}}_2 =\frac{1}{\sqrt{8}}\sum_{\mathbf{x}\in\{0,1\}^3}\Big(\ket{\mathbf{x}}_3\otimes\ket{\gamma_{01}}_2\Big).
$$

Let us remark now that each state of the aforementioned superposition satisfies the conditions of Lemma $2.1$, and thus if we apply the $\mathbf{U}_f$ gate we will mark the states of the superposition depending on their image.
$$
\ket{\varphi_3}_5 = \mathbf{U}_f\ket{\varphi_2}_5 = \frac{1}{\sqrt{8}}\sum_{\mathbf{x}\in\{0,1\}^3}(-1)^{f(\mathbf{x})\cdot 01}\Big(\ket{\mathbf{x}}_3\otimes\ket{\gamma_{01}}_2\Big).
$$

\begin{remark}
Another way of looking at this Generalised Phase Kick-Back idea is to write the state $\ket{\gamma_{\mathbf{y}}}$ as a tensor product of $\ket{+}$ and $\ket{-}$ states. As an example, in the instance we are dealing with we have:
$$
\ket{\gamma_{01}}_2 = \ket{+}\otimes\ket{-}.
$$

In general, for a given $\mathbf{y}\in\{0,1\}^m$ we would have a $\ket{+}$ in the $i$-th position if the $i$-th bit of $\mathbf{y}$ is $0$ and $\ket{-}$ if it is $1$. In that sense, we could look at this Generalised Phase Kick-Back as a cascade of Phase Kick-Backs in those positions of $\mathbf{y}$ in which there is a $1$. 
\end{remark}

Let us now focus our attention on the first $3$-qubit register and use that $f(\mathbf{x})\cdot 01 = \mathbf{x} \cdot 010$:
$$
\ket{\varphi_4}_3 = \frac{1}{\sqrt{8}}\sum_{\mathbf{x}\in\{0,1\}^3}(-1)^{f(\mathbf{x})\cdot 01}\ket{\mathbf{x}}_3 = \frac{1}{\sqrt{8}}\sum_{\mathbf{x}\in\{0,1\}^3}(-1)^{\mathbf{x}\cdot 010}\ket{\mathbf{x}}_3.
$$

And finally, if we apply Hadamard gates to this $3$-qubit system, we will get the state $\ket{010}_3$.
$$
\ket{\varphi_5}_3 = \mathbf{H}_3\ket{\varphi_4}_3 = \ket{010}_3.
$$

It is not clear now how this idea is helpful, as the final result is directly determined by the initial $\mathbf{y}$ we chose, and if we had fixed $\mathbf{y} = 10$, then the final result would have been $100$. However, suppose now that we do not know which of the bits $f$ eliminates, and we want to determine which one it is. We only have three possibilities, and we could easily check with one classical call to $f$ which of the bits is eliminated---simply compute $f(010)$---but it is interesting to do it by using our new tool.

\begin{lemma}

Let $f:\{0,1\}^n\to \{0,1\}^{n-1}$ be a Boolean function that eliminates one bit, then we can use the algorithm above to determine which bit is eliminated.

\end{lemma}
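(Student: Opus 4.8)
The plan is to make the hypothesis precise and then run exactly the algorithm of the worked example, reading the answer off the deterministic final state. Say $f$ eliminates the bit in position $j\in\{0,1,\dots,n-1\}$, so that for $\mathbf{x}=x_{n-1}\cdots x_1x_0$ we have $f(\mathbf{x})=x_{n-1}\cdots x_{j+1}x_{j-1}\cdots x_0$. Equivalently, $f(\mathbf{x})=D_j\cdot\mathbf{x}$, where $D_j$ is the $(n-1)\times n$ Boolean matrix obtained from the $n\times n$ identity by deleting its $j$-th column. Having $f$ only as a black box means we do not know $j$; that index is precisely what we want to recover.

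First I would fix the marker $\mathbf{y}=11\cdots1\in\bs{n-1}$, the all-ones string, and run the algorithm: start from $\ket{\mathbf{0}}_n\otimes\ket{\mathbf{0}}_{n-1}$, prepare the second register to $\ket{\mathbf{y}}_{n-1}$, apply $\mathbf{H}_n$ to both registers, and then $\mathbf{U}_f$. Applying Lemma 2.1 termwise to the uniform superposition, the first register becomes
$$
\ket{\varphi_4}_n=\frac{1}{\sqrt{2^n}}\sum_{\mathbf{x}\in\bs{n}}(-1)^{\mathbf{y}\cdot f(\mathbf{x})}\ket{\mathbf{x}}_n .
$$

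Next I would rewrite the phase, which is the one genuinely computational step. Using $f(\mathbf{x})=D_j\cdot\mathbf{x}$ and the bilinearity of the pairing over $\mathbb{F}_2$, one gets $\mathbf{y}\cdot f(\mathbf{x})=(D_j^{\mathsf T}\cdot\mathbf{y})\cdot\mathbf{x}$; write $\mathbf{w}_j:=D_j^{\mathsf T}\cdot\mathbf{y}\in\bs{n}$. Since $D_j^{\mathsf T}$ is the $n\times(n-1)$ matrix that is the identity with a zero row inserted at index $j$, the string $\mathbf{w}_j$ is exactly $\mathbf{y}$ with a $0$ inserted in position $j$; for $\mathbf{y}=11\cdots1$ this is the $n$-bit string whose only $0$ sits in position $j$. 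This is the general form of the identity $f(\mathbf{x})\cdot 01=\mathbf{x}\cdot 010$ used in the example. Hence $\ket{\varphi_4}_n=\frac{1}{\sqrt{2^n}}\sum_{\mathbf{x}\in\bs{n}}(-1)^{\mathbf{w}_j\cdot\mathbf{x}}\ket{\mathbf{x}}_n=\mathbf{H}_n\ket{\mathbf{w}_j}_n$, so a final Hadamard layer yields $\ket{\varphi_5}_n=\ket{\mathbf{w}_j}_n$ with certainty. Measuring the first register returns $\mathbf{w}_j$, and $j$ is recovered as the unique coordinate of the outcome equal to $0$. The whole procedure uses a single query to $\mathbf{U}_f$.

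The step I expect to need the most care is the index bookkeeping between the $(n-1)$-qubit and the $n$-qubit register: one must verify that "delete column $j$" transposes to "insert a zero row at position $j$", so that the $0$ in the measured string lands precisely on the coordinate discarded by $f$. Everything else is immediate from Lemma 2.1 and the action of $\mathbf{H}_n$ on the computational basis. I would also note that $\mathbf{y}=11\cdots1$ is merely the most convenient marker: any $\mathbf{y}$ for which $j\mapsto D_j^{\mathsf T}\cdot\mathbf{y}$ is injective would work equally well.
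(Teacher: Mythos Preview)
Your argument is correct, but it differs from the paper's. The paper runs the algorithm $n-1$ times, once for each canonical basis vector $\mathbf{e}_i\in\bs{n-1}$ as marker; the $i$-th run outputs the $n$-bit canonical vector $\mathbf{e}_i^{(n)}$ if $i<j$ and $\mathbf{e}_{i+1}^{(n)}$ if $i\ge j$, so after all $n-1$ runs one has collected every canonical basis vector of $\mathbb{F}_2^n$ except $\mathbf{e}_j^{(n)}$, and the missing one reveals $j$. You instead make a single run with the all-ones marker $\mathbf{y}=1^{n-1}$ and read $j$ directly as the unique zero coordinate of the outcome $\mathbf{w}_j=D_j^{\mathsf T}\mathbf{y}$.

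What each buys: your version is more query-efficient (one call to $\mathbf{U}_f$ instead of $n-1$) and the transpose identity $\mathbf{y}\cdot(D_j\mathbf{x})=(D_j^{\mathsf T}\mathbf{y})\cdot\mathbf{x}$ makes the mechanism transparent. The paper's version, on the other hand, is tailored to its expository purpose: it shows how iterating the GPK over a basis of markers recovers a full linear description of $f$, exactly the pattern that is reused for the generalised Deutsch--Jozsa and Bernstein--Vazirani problems later in the paper. The paper also explicitly notes that this lemma is not meant as a speed-up (even a single classical evaluation of $f$ on $1010\ldots$ suffices), so your single-query observation, while nice, does not contradict anything.
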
 

\begin{proof}
To do so, we just apply the generalised version of the algorithm mentioned above $n-1$ times, using each time one of the vectors of the canonical basis of $\{0,1\}^{n-1}$ as an $\mathbb{F}_2$ vector space. If we denote by $\mathbf{e}_i$ the string of bits whose only $1$ is in the $i$-th position (starting by $0$)---i.e., the $i$-th element of the canonical basis---then each of the $n-1$ iterations of the algorithm would go as follows:
$$
\ket{\varphi_0}_{n,n-1} = \ket{\mathbf{0}}_n\otimes\ket{\mathbf{0}}_{n-1}.
$$

First, we obtain $\mathbf{e}_i$ in the second register by applying the $\mathbf{X}$ gate wherever we need:
$$
\ket{\varphi_1}_{n,n-1} = \ket{\mathbf{0}}_n\otimes\ket{\mathbf{e}_i}_{n-1}.
$$

Second, we apply Hadamard gates:
$$
\ket{\varphi_2}_{n,n-1} = \left(\mathbf{H}_{2n-1}\right)\ket{\varphi_1}_{2n-1}.
$$

Then, we use the $\GPK$ (Generalised Phase Kick-Back):
$$
\ket{\varphi_3}_{n,n-1} = \mathbf{U}_f\ket{\varphi_2}_{2n-1}.
$$

And finally, we apply Hadamard gates to the first register and measure:
$$
\ket{\varphi_4}_{n,n-1} = (\mathbf{H}_n\otimes\mathbf{I}^{\otimes (n-1)})\ket{\varphi_3}_{n,n-1}.
$$

After we have done so with all $n-1$ possible $\mathbf{e}_i$, we will have obtained $n-1$ of the $n$ vectors of the canonical basis of $\mathbb{F}_2^n$, and the one left indicates which of the bits is eliminated.
\end{proof}

This, of course, does not give us an improvement of any sort over the classical case---it is actually the opposite, as we could have just computed the image of $\mathbf{x} = 101010\ldots$ and checked for repeated characters---but it illustrates the inner workings of the technique.

Some other examples such as this could be constructed. Another one is the problem of, given an $f$ that switches one unknown bit, finding out which one. However, we will now focus on a problem in which this idea allows for an improvement over the classical situation.

\end{section}

\begin{section}{The generalised Deutsch--Jozsa problem}

An easy follow-up to the previous section would be to solve a generalised version of the Deutsch--Jozsa problem using this technique. Let us begin by presenting the promised problem.

\begin{definition}{(Generalised Deutsch--Jozsa problem.)}

 We say that a Boolean function is balanced if half of the input values output one string and the other half output another. 
 
 Given then a Boolean function $f:\{0,1\}^n\to \{0,1\}^m$ that can either be constant or balanced, we will denote by Generalised Deutsch--Jozsa problem the one of finding out in which of the cases are we.

\end{definition}

The Deutsch--Jozsa problem is clearly one instance of this general problem where $m = 1$, and thus we will show how we can solve this problem by using an algorithm inspired by that of Deutsch and Jozsa. Let us begin by taking a moment to think about the complexity we are dealing with.

\begin{remark}

It is clear that if we want to solve this problem using classical deterministic methods, we will need something of the order of $\mathcal{O}(2^{n-1})$ applications of $f$. We will see how we can improve this with a quantum algorithm to an order of $\mathcal{O}(m)$ calls to $f$. Note also that this includes the already known case where $m=1$.

\end{remark}

To begin taking a look at this problem, let us limit ourselves to the instance where constant means that $f(\mathbf{x}) = \mathbf{0}$ for every $\mathbf{x}\in\{0,1\}^n$ and balanced means that half of the values are $\mathbf{0}$ and the other half a fixed string different from $\mathbf{0}$.

Let us first expose our algorithm and then worry about the analysis. Given $\mathbf{e}_i = 0^{(m-1)-i} \ 1 \ 0^{i}$, where $i=0,\ldots,m-1$, we will repeat the following algorithm for each $\mathbf{e}_i$, but it could actually be done for any binary string $\mathbf{y}\in\{0,1\}^m$.

\vspace{5mm}

$\mathbb{STEP}$ $1$

$\ket{\varphi_0}_{n,m} = \ket{\mathbf{0}}_n\otimes\ket{\mathbf{0}}_m.$

\vspace{5mm}

We begin with two registers of $n$ and $m$ qubits, both at the state $\ket{\mathbf{0}}$.

\vspace{5mm}

$\mathbb{STEP}$ $2$

$\ket{\varphi_1}_{n,m} = \left(\mathbf{I}^{\otimes n}\otimes\mathbf{I}^{\otimes (m-1-i)}\otimes\mathbf{X}\otimes\mathbf{I}^{\otimes i}\right)\ket{\varphi_0}_{n,m}.$

\vspace{5mm}

We apply the $\mathbf{X}$ gate to achieve the desired $\ket{\mathbf{e}_i}$ state in the second register. If we want any other binary string $\mathbf{y}$ to act as a marker, we should apply the corresponding $\mathbf{X}$ gates in the necessary positions.

\vspace{5mm}

$\mathbb{STEP}$ $3$

$\ket{\varphi_2}_{n,m} = \mathbf{H}_{n+m}\ket{\varphi_1}_{n,m}.$

\vspace{5mm}

We apply Hadamard gates to obtain the desired superposition in the first register and $\ket{\gamma_{\mathbf{e}_i}}$ in the second.

\vspace{5mm}

$\mathbb{STEP}$ $4$

$\ket{\varphi_3}_{n,m} = \mathbf{U}_f\ket{\varphi_2}_{n,m}.$

\vspace{5mm}

We apply $\mathbf{U}_f$ to use the $\GPK$ technique.

\vspace{5mm}

$\mathbb{STEP}$ $5$

$\ket{\varphi_4}_{n,m} = \left(\mathbf{H}^{\otimes n}\otimes \mathbf{I}^{\otimes m}\right)\ket{\varphi_3}_{n,m}.$

\vspace{5mm}

At this point, the second register might be discarded and we apply Hadamard gates to the first one.

\vspace{5mm}

$\mathbb{STEP}$ $6$

We measure the first register and name the result $\delta_i$.

\vspace{5mm}

If after repeating these steps for each $i$ we obtain only $\delta_i = \mathbf{0}$ strings, then the function is constant; otherwise it is balanced.

\begin{definition}{(Generalised Phase Kick-Back algorithm.)}
The only variable in the algorithm is the choice of the marker $\mathbf{y}$ used for the Phase Kick-Back.  We will refer to this algorithm as \textit{GPK algorithm for $\mathbf{y}$} or $\GPK(\mathbf{y})$. From now on, the notation regarding this algorithm will be the same as before. 
\end{definition}

\begin{theorem}{(Correctness of the algorithm.)}
The aforementioned algorithm correctly determines whether a function is constant or balanced in the case where the image set of $f$ includes $\mathbf{0}$.
\end{theorem}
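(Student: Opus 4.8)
The plan is to track a single run of $\GPK(\mathbf{y})$ symbolically and read off the amplitude of $\ket{\mathbf{0}}_n$ in the final state $\ket{\varphi_4}_{n,m}$, then evaluate that amplitude in the constant and balanced cases. First I would note that after STEP $3$ the state is
$$
\ket{\varphi_2}_{n,m} = \left(\frac{1}{\sqrt{2^n}}\sum_{\mathbf{x}\in\bs{n}}\ket{\mathbf{x}}_n\right)\otimes\ket{\gamma_{\mathbf{y}}}_m,
$$
so that by Lemma $2.1$ each summand $\ket{\mathbf{x}}_n\otimes\ket{\gamma_{\mathbf{y}}}_m$ is an eigenvector of $\mathbf{U}_f$ with eigenvalue $(-1)^{\mathbf{y}\cdot f(\mathbf{x})}$, giving
$$
\ket{\varphi_3}_{n,m} = \frac{1}{\sqrt{2^n}}\sum_{\mathbf{x}\in\bs{n}}(-1)^{\mathbf{y}\cdot f(\mathbf{x})}\ket{\mathbf{x}}_n\otimes\ket{\gamma_{\mathbf{y}}}_m.
$$
Applying $\mathbf{H}^{\otimes n}$ to the first register exactly as in the Deutsch--Jozsa computation already recalled in the introduction, the amplitude of $\ket{\mathbf{z}}_n$ becomes $\frac{1}{2^n}\sum_{\mathbf{x}}(-1)^{\mathbf{x}\cdot\mathbf{z}\oplus\mathbf{y}\cdot f(\mathbf{x})}$; in particular the amplitude of $\ket{\mathbf{0}}_n$ is $\frac{1}{2^n}\sum_{\mathbf{x}\in\bs{n}}(-1)^{\mathbf{y}\cdot f(\mathbf{x})}$.

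Next I would evaluate this quantity in the two cases. If $f$ is constant with $f\equiv\mathbf{0}$, then $\mathbf{y}\cdot f(\mathbf{x})=0$ for all $\mathbf{x}$, the amplitude of $\ket{\mathbf{0}}_n$ is $1$, and by normalisation all other amplitudes vanish, so STEP $6$ returns $\delta=\mathbf{0}$ with certainty, for every marker $\mathbf{y}$. If instead $f$ is balanced in the restricted sense, with $f(\mathbf{x})=\mathbf{0}$ on half the inputs and $f(\mathbf{x})=\mathbf{a}$ for a fixed $\mathbf{a}\neq\mathbf{0}$ on the other half, then
$$
\frac{1}{2^n}\sum_{\mathbf{x}\in\bs{n}}(-1)^{\mathbf{y}\cdot f(\mathbf{x})} = \frac{1}{2^n}\left(\frac{2^n}{2} + \frac{2^n}{2}(-1)^{\mathbf{y}\cdot\mathbf{a}}\right) = \frac{1+(-1)^{\mathbf{y}\cdot\mathbf{a}}}{2},
$$
which is $1$ when $\mathbf{y}\cdot\mathbf{a}=0$ and $0$ when $\mathbf{y}\cdot\mathbf{a}=1$; in the latter case $\ket{\mathbf{0}}_n$ has zero amplitude, so the measurement yields $\delta\neq\mathbf{0}$ with certainty.

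Finally I would assemble the decision procedure. For the marker $\mathbf{y}=\mathbf{e}_i$ the scalar $\mathbf{e}_i\cdot\mathbf{a}$ is precisely the $i$-th bit of $\mathbf{a}$, so running $\GPK(\mathbf{e}_i)$ for $i=0,\dots,m-1$ produces $\delta_i=\mathbf{0}$ exactly when the $i$-th bit of $\mathbf{a}$ is $0$. Since $\mathbf{a}\neq\mathbf{0}$ in the balanced case, at least one bit of $\mathbf{a}$ equals $1$, hence at least one $\delta_i\neq\mathbf{0}$; in the constant case every $\delta_i=\mathbf{0}$. Thus "all $\delta_i=\mathbf{0}$" correctly characterises the constant case, which is the claim. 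The one point that needs care — and the closest thing to an obstacle — is precisely this last observation: a single marker does \emph{not} suffice, because in the balanced case a marker orthogonal to $\mathbf{a}$ still returns $\mathbf{0}$; running over the full canonical basis $\{\mathbf{e}_0,\dots,\mathbf{e}_{m-1}\}$ is exactly what guarantees some marker detects the nonzero string $\mathbf{a}$, and this is why the algorithm uses $\mathcal{O}(m)$ calls rather than one.
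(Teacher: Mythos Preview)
Your proof is correct and follows essentially the same route as the paper's: both track the state through the $\GPK$ steps via Lemma~$2.1$, read off the amplitude $\tfrac{1}{2^n}\sum_{\mathbf{x}}(-1)^{\mathbf{y}\cdot f(\mathbf{x})}$ of $\ket{\mathbf{0}}_n$, and argue that this equals $\pm1$ in the constant case and vanishes for some $\mathbf{e}_i$ in the balanced case. Your explicit formula $\tfrac{1+(-1)^{\mathbf{y}\cdot\mathbf{a}}}{2}$ and closing remark on why the full basis of markers is needed are slight elaborations, but the argument is the same.
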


\begin{proof}

Given $i=0,\ldots,m-1$, let us keep track of the states step by step:

As we are applying the $\mathbf{X}$ gate on the $i$-th qubit of the second register (counting from $0$), then
$$
\ket{\varphi_1}_{n,m} = \ket{\mathbf{0}}_n\otimes\ket{\mathbf{e}_i}_m,
$$

Next,
$$
\ket{\varphi_2}_{n,m} = \frac{1}{\sqrt{N}} \sum_{\mathbf{x}\in\{0,1\}^n} \ket{\mathbf{x}}_n\otimes\ket{\gamma_{\mathbf{e}_i}}_m,
$$
just by the definition of $\ket{\gamma_{\mathbf{e}_i}}$ and the known effect of Hadamard gates on the $\ket{\mathbf{0}}$ state. Finally, we obtain
$$
\ket{\varphi_3}_{n,m} = \left(\frac{1}{\sqrt{N}} \sum_{\mathbf{x}\in\{0,1\}^n} (-1)^{f(\mathbf{x})\cdot \mathbf{e}_i}\ket{\mathbf{x}}_n\right)\otimes\ket{\gamma_{\mathbf{e}_i}}_m,
$$
where $N=2^n$ by applying Lemma $2.1$.

If we focus now only on the first register, we will have the following state:
$$
\frac{1}{\sqrt{N}} \sum_{\mathbf{x}\in\{0,1\}^n} (-1)^{f(\mathbf{x})\cdot \mathbf{e}_i}\ket{\mathbf{x}}_n.
$$

Then, after applying the Hadamard gates, we will have:
$$
\mathbf{H}_n\frac{1}{\sqrt{N}} \sum_{\mathbf{x}\in\{0,1\}^n} (-1)^{f(\mathbf{x})\cdot \mathbf{e}_i}\ket{\mathbf{x}}_n =  \frac{1}{\sqrt{N}} \sum_{\mathbf{x}\in\{0,1\}^n} (-1)^{f(\mathbf{x})\cdot \mathbf{e}_i}\mathbf{H}_n\ket{\mathbf{x}}_n
$$
$$
= \frac{1}{\sqrt{N}} \sum_{\mathbf{x}\in\{0,1\}^n} (-1)^{f(\mathbf{x})\cdot \mathbf{e}_i}\left(\frac{1}{\sqrt{N}}\sum_{\mathbf{z}\in\{0,1\}^n} (-1)^{\mathbf{x}\cdot \mathbf{z}}\ket{\mathbf{z}}_n\right)
$$
$$
= \frac{1}{N}\sum_{\mathbf{z}\in\{0,1\}^n} \left[\sum_{\mathbf{x}\in\{0,1\}^n} (-1)^{f(\mathbf{x})\cdot \mathbf{e}_i \oplus \mathbf{x}\cdot \mathbf{z}}\right]\ket{\mathbf{z}}_n.
$$

It is easy to check that if the function is constant and equal to $\mathbf{0}$, then regardless of the value of $i$ the amplitude of $\ket{\mathbf{0}}_n$ in the previous superposition is the following:
$$
\frac{1}{N}\sum_{\mathbf{x}\in\{0,1\}^n} (-1)^{f(\mathbf{x})\cdot \mathbf{e}_i} = \frac{1}{N}\sum_{\mathbf{x}\in\{0,1\}^n} (-1)^0 = 1.
$$

Thus, we will always obtain $\delta_i = \mathbf{0}$ no matter which marker we use.

If $f$ is not constant, then when $f(\mathbf{x}) \neq \mathbf{0}$ there must be an $i\in\{0,\ldots,m-1\}$ for which $f(\mathbf{x})\cdot \mathbf{e}_i = 1$. If we take such a $\mathbf{e}_i$, then the amplitude for $\ket{\mathbf{0}}_n$ is:
$$
\frac{1}{N}\sum_{\mathbf{x}\in\{0,1\}^n} (-1)^{\left(f(\mathbf{x})\cdot \mathbf{e}_i\right) \oplus \left(\mathbf{x} \cdot \mathbf{0}\right) } = \frac{1}{N}\sum_{\mathbf{x}\in\{0,1\}^n} (-1)^{f(\mathbf{x})\cdot \mathbf{e}_i} = 0,
$$
because $f(\mathbf{x})$ is balanced and thus half the elements of the sum will be $1$ and the other half $-1$. This implies that we would get a result different from $\mathbf{0}$ for that $i$.
\end{proof}

Note that the choice of the canonical basis is not compulsory and that we could have chosen any other basis of $\mathbb{F}_2^m$ as our markers.

Let us take a moment to prove that the same idea works for the general case of the Generalised Deutsch--Jozsa problem.

\begin{theorem}{(General correctness.)}
The previous algorithm correctly determines whether a function is constant or balanced.
\end{theorem}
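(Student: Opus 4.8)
The plan is to re-run the calculation from the proof of Theorem~3.3 essentially verbatim, noting that the passage from $\ket{\varphi_1}_{n,m}$ to $\ket{\varphi_4}_{n,m}$ never used the hypothesis that $\mathbf{0}$ lies in the image of $f$. So, first, I would record that for an arbitrary Boolean function $\bff{n}{m}$ and an arbitrary marker $\mathbf{y}\in\bs{m}$, Lemma~2.1 still gives, after Step~5,
$$ \ket{\varphi_4}_{n,m} = \left(\frac{1}{N}\sum_{\mathbf{z}\in\bs{n}}\left[\sum_{\mathbf{x}\in\bs{n}}(-1)^{f(\mathbf{x})\cdot\mathbf{y}\,\oplus\,\mathbf{x}\cdot\mathbf{z}}\right]\ket{\mathbf{z}}_n\right)\otimes\ket{\gamma_{\mathbf{y}}}_m,\qquad N=2^n, $$
so that the amplitude of $\ket{\mathbf{0}}_n$ in the first register is exactly $\frac{1}{N}\sum_{\mathbf{x}\in\bs{n}}(-1)^{f(\mathbf{x})\cdot\mathbf{y}}$. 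The whole argument then reduces to evaluating this scalar for $\mathbf{y}=\mathbf{e}_i$ in the two cases; the one new idea needed beyond Theorem~3.3 is that a uniform translation of the values of $f$ changes it only by a global sign, which the measurement in Step~6 cannot see.

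Next I would treat the constant case: if $f(\mathbf{x})=\mathbf{c}$ for all $\mathbf{x}$, pull $(-1)^{\mathbf{c}\cdot\mathbf{y}}$ out of the inner sum, so the amplitude of $\ket{\mathbf{z}}_n$ is $\frac{1}{N}(-1)^{\mathbf{c}\cdot\mathbf{y}}\sum_{\mathbf{x}\in\bs{n}}(-1)^{\mathbf{x}\cdot\mathbf{z}}$, which is $(-1)^{\mathbf{c}\cdot\mathbf{y}}$ for $\mathbf{z}=\mathbf{0}$ and $0$ otherwise. Hence $\ket{\varphi_4}_{n,m}$ is a global phase times $\ket{\mathbf{0}}_n\otimes\ket{\gamma_{\mathbf{y}}}_m$, and Step~6 returns $\delta_i=\mathbf{0}$ for every $i$, exactly as in Theorem~3.3: a constant equal to a nonzero string is indistinguishable, for this algorithm, from the zero constant.

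Then I would treat the balanced case: let $\mathbf{a}\neq\mathbf{b}$ be the two output strings, each attained on $N/2$ inputs. Grouping the sum over these two fibres, the amplitude of $\ket{\mathbf{0}}_n$ becomes
$$ \frac{1}{N}\left(\frac{N}{2}(-1)^{\mathbf{a}\cdot\mathbf{y}}+\frac{N}{2}(-1)^{\mathbf{b}\cdot\mathbf{y}}\right)=\frac{(-1)^{\mathbf{a}\cdot\mathbf{y}}+(-1)^{\mathbf{b}\cdot\mathbf{y}}}{2}, $$
which vanishes precisely when $\mathbf{a}\cdot\mathbf{y}\neq\mathbf{b}\cdot\mathbf{y}$, i.e.\ when $(\mathbf{a}\oplus\mathbf{b})\cdot\mathbf{y}=1$, using bilinearity of the pairing. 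Since $\mathbf{a}\oplus\mathbf{b}\neq\mathbf{0}$, it carries a $1$ in some position $i$, and for the marker $\mathbf{y}=\mathbf{e}_i$ the amplitude of $\ket{\mathbf{0}}_n$ is zero, so the measurement in Step~6 cannot output $\mathbf{0}$ and we get $\delta_i\neq\mathbf{0}$. (Equivalently, one can factor out $(-1)^{\mathbf{a}\cdot\mathbf{y}}$ and observe that $f(\mathbf{x})\oplus\mathbf{a}$ is balanced with values $\mathbf{0}$ and $\mathbf{a}\oplus\mathbf{b}$, reducing this case directly to Theorem~3.3.) Combining the two cases, the run over $i=0,\ldots,m-1$ yields $\delta_i=\mathbf{0}$ for all $i$ if and only if $f$ is constant, which is the statement.

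I do not expect a genuine obstacle here — the work of the generalisation is entirely absorbed by the remark that the algorithm accesses $f$ only through the numbers $f(\mathbf{x})\cdot\mathbf{y}$, on which a fixed shift of the values of $f$ acts as a global phase. The only two points I would be careful to spell out are: that zero amplitude at $\ket{\mathbf{0}}_n$ forces $\delta_i\neq\mathbf{0}$ with certainty (immediate from the Born rule), and — if one wishes to use an arbitrary basis of $\mathbb{F}_2^m$ for the markers rather than the canonical one, as noted after Theorem~3.3 — the elementary fact that a nonzero vector pairs to $1$ with at least one member of any basis, so that the witness $\mathbf{e}_i$ of the balanced case continues to exist.
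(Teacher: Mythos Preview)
Your proposal is correct and follows essentially the same route as the paper: compute the amplitude of $\ket{\mathbf{0}}_n$ in $\ket{\varphi_4}$, observe it equals $\pm 1$ when $f$ is constant (hence $\delta_i=\mathbf{0}$ for all $i$), and in the balanced case pick an $i$ where the two output strings differ so that the amplitude vanishes and $\delta_i\neq\mathbf{0}$. Your write-up is slightly more explicit (the fibre grouping, the Born-rule remark, the reduction-to-Theorem~3.3 alternative), but the argument is the same.
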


\begin{proof}
The only thing left to analyse is the final amplitudes in the general case. To do so, we need to recall that the final state is:
$$
\frac{1}{N}\sum_{\mathbf{z}\in\{0,1\}^n} \left[\sum_{\mathbf{x}\in\{0,1\}^n} (-1)^{\left(f(\mathbf{x})\cdot \mathbf{e}_i\right) \oplus \left(\mathbf{x}\cdot \mathbf{z}\right)}\right]\ket{\mathbf{z}}_n.
$$

If we analyse now the amplitude of $\ket{\mathbf{z}}_n = \ket{\mathbf{0}}_n$, we would be left with:
$$
\frac{1}{N}\sum_{\mathbf{x}\in\{0,1\}^n} (-1)^{\left(f(\mathbf{x})\cdot \mathbf{e}_i\right) \oplus \left(\mathbf{x} \cdot \mathbf{0}\right)} = \frac{1}{N}\sum_{\mathbf{x}\in\{0,1\}^n} (-1)^{f(\mathbf{x})\cdot \mathbf{e}_i}.
$$

If $f(\mathbf{x})$ is constant, then $f(\mathbf{x})\cdot \mathbf{e}_i$ is either always $0$ or always $1$, as $\mathbf{x}$ varies. Whichever the case, the final amplitude will be either $1$ or $-1$ and thus we will always get $\mathbf{0}$ at the end of the algorithm.

On the other hand, if $f(\mathbf{x})$ is balanced with possible values $\mathbf{f}_1, \mathbf{f}_2\in\{0,1\}^m$ such that $\mathbf{f}_1\neq \mathbf{f}_2$, then there is a $i\in\{0,\ldots,m-1\}$ such that $\mathbf{f}_1\cdot \mathbf{e}_i \neq \mathbf{f}_2\cdot \mathbf{e}_i$, and for that $i$ the amplitude of $\mathbf{z} = \mathbf{0}$ would be:
$$
\frac{1}{N}\sum_{\mathbf{x}\in\{0,1\}^n} (-1)^{f(\mathbf{x})\cdot \mathbf{e}_i}.
$$

As the function is balanced between $\mathbf{f}_1$ and $\mathbf{f}_2$, that amplitude is $0$ and thus we would get a result different from $\mathbf{0}$.
\end{proof}

This algorithm not only allows us to distinguish constant and balanced functions, but it also allows us to determine the values of the function. In the balanced situation, it would not be possible to do that efficiently in a deterministic way.

\begin{corollary}

It is possible to determine the possible values of $f$ by applying the aforementioned algorithm and making a classical call to the function.

\end{corollary}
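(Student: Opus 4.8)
The plan is to extract the possible output string(s) of $f$ from $m$ runs of the algorithm together with one ordinary evaluation of $f$. The point is that a single run of $\GPK(\mathbf{e}_i)$ does more than detect whether $f$ is constant: it reveals the $i$-th bit of the \emph{difference} $\mathbf{f}_1\oplus\mathbf{f}_2$ of the two values, and the canonical markers $\mathbf{e}_0,\dots,\mathbf{e}_{m-1}$ are chosen precisely so that these bits can be read off one at a time.

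First I would make explicit what Step $6$ returns. As computed in the proof of the preceding theorem, after Step $5$ the first register is in the state $\tfrac{1}{N}\sum_{\mathbf{z}\in\bs{n}}\big[\sum_{\mathbf{x}\in\bs{n}}(-1)^{(f(\mathbf{x})\cdot\mathbf{e}_i)\oplus(\mathbf{x}\cdot\mathbf{z})}\big]\ket{\mathbf{z}}_n$. When $f$ is balanced with values $\mathbf{f}_1\neq\mathbf{f}_2$, the amplitude at $\mathbf{z}=\mathbf{0}$ is $\tfrac12\big((-1)^{\mathbf{f}_1\cdot\mathbf{e}_i}+(-1)^{\mathbf{f}_2\cdot\mathbf{e}_i}\big)$, which has modulus $1$ when $\mathbf{f}_1\cdot\mathbf{e}_i=\mathbf{f}_2\cdot\mathbf{e}_i$ and equals $0$ otherwise; when $f$ is constant the amplitude at $\mathbf{0}$ is $\pm1$. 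Since the register is a normalised state, an amplitude of modulus $1$ at $\mathbf{0}$ forces the state to be $\ket{\mathbf{0}}_n$, while an amplitude $0$ at $\mathbf{0}$ forces the state to be supported away from $\mathbf{0}$. Hence Step $6$ is deterministic, and $\delta_i=\mathbf{0}$ holds precisely when $\mathbf{f}_1\cdot\mathbf{e}_i=\mathbf{f}_2\cdot\mathbf{e}_i$, i.e.\ precisely when the $i$-th bit of $\mathbf{f}_1\oplus\mathbf{f}_2$ vanishes (and always, in the constant case).

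Next I would assemble the answer. Run $\GPK(\mathbf{e}_i)$ for $i=0,\dots,m-1$ and define $\mathbf{d}\in\bs{m}$ by taking its $i$-th bit to be $1$ iff $\delta_i\neq\mathbf{0}$. By the previous step $\mathbf{d}=\mathbf{f}_1\oplus\mathbf{f}_2$, so $\mathbf{d}=\mathbf{0}$ precisely when $f$ is constant, a balanced function having $\mathbf{f}_1\neq\mathbf{f}_2$. If $\mathbf{d}=\mathbf{0}$, evaluate $f$ classically at any $\mathbf{x}_0\in\bs{n}$: the result is the constant value of $f$, and we are done. If $\mathbf{d}\neq\mathbf{0}$, evaluate $f$ classically at any $\mathbf{x}_0$ to obtain one of the two values, call it $\mathbf{f}_1$; the other is then recovered as $\mathbf{f}_2=\mathbf{f}_1\oplus\mathbf{d}$. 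In both cases we have used $m$ applications of $\mathbf{U}_f$ and one classical call, as claimed.

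I do not expect a genuine obstacle here, since the required computation is already available from the correctness proofs; the only point worth stressing is why the classical call cannot be dropped. The quantum runs only ever see $f$ through the phases $(-1)^{f(\mathbf{x})\cdot\mathbf{e}_i}$, which are invariant under globally interchanging $\mathbf{f}_1$ and $\mathbf{f}_2$, so they can recover $\mathbf{f}_1\oplus\mathbf{f}_2$ but never the individual strings; one honest evaluation of $f$ breaks exactly this symmetry and fixes which value is which.
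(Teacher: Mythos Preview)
Your proposal is correct and follows essentially the same route as the paper: both define a bit string from the outcomes $\delta_i$ (the paper calls it $\boldsymbol\lambda$, you call it $\mathbf{d}$), identify it with $\mathbf{f}_1\oplus\mathbf{f}_2$, and then fix the individual values via a single classical evaluation such as $f(\mathbf{0})$. Your write-up is in fact a bit more explicit than the paper's in justifying that Step~$6$ is deterministic and in explaining why the classical call cannot be avoided, but the underlying argument is the same.
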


\begin{proof}

Let us begin by the case in which the possible images are $\mathbf{0}$ and $\mathbf{f}_1$. In this situation, the values of $i$ for which we obtain a result different from $\delta_i = \mathbf{0}$ mark the bits of $\mathbf{f}_1$ that are different from $0$, thus determining exactly the value of $\mathbf{f}_1$, so $\mathbf{f}_1 = \boldsymbol\lambda = \lambda_{m-1}\ldots\lambda_1\lambda_0$, where we define $\lambda_i$ as:

$$\lambda_i = \begin{cases} 0 & \text{ if } \delta_i = \mathbf{0} \\ 1 & \text{ otherwise}.
\end{cases}$$

In the general case, if we note the two possible images by $\mathbf{f}_1$ and $\mathbf{f}_2$, the $\boldsymbol\lambda = \lambda_{m-1}\ldots\lambda_1\lambda_0$ string tells us that the Boolean bitwise difference between $\mathbf{f}_1$ and $\mathbf{f}_2$---i.e., $\mathbf{f}_1\oplus \mathbf{f}_2$---. Thus, we would know that $\mathbf{f}_1 = \mathbf{f}_2 \oplus \boldsymbol\lambda$. If we now classically calculate one of the possible images---for instance $f(\mathbf{0})$---we would be able to retrieve both values.
\end{proof}

\begin{remark}

We also have to point out that we have solved the problem by applying the quantum gate $\mathbf{U}_f$ $m$ times, which is an exponential improvement over the deterministic classical situation when $m$ is of linear order with respect to $n$.

\end{remark}

Before we move on to the next generalised problem, let us take a moment to make two important remarks.

\begin{remark}

The first is about a certain pattern that will reappear in the next section, which is that the $\GPK$ algorithm is unable to detect translations. That is, given two Boolean functions $f_1, f_2:\bs{n}\to\bs{m}$ for which there is an $\mathbf{s}\in\bs{n}$ such that $f_1(\mathbf{x}) = f_2(\mathbf{x})\oplus \mathbf{s}$ for every $\mathbf{x}\in\bs{n}$, if we analyse the first register of $\ket{\vp_4}_{n+m}$ for function $f_2$ using $\mathbf{y}\in\bs{m}$ as a marker, we get:
$$
\frac{1}{N}\sum_{\mathbf{z}\in\{0,1\}^n} \left[\sum_{\mathbf{x}\in\{0,1\}^n} (-1)^{f_2(\mathbf{x})\cdot \mathbf{y} \oplus \mathbf{x}\cdot \mathbf{z}}\right]\ket{\mathbf{z}}_n.
$$

And if we now use that $f_2(\mathbf{x}) = f_1(\mathbf{x})\oplus\mathbf{s}$, we get:
$$
\frac{1}{N}\sum_{\mathbf{z}\in\{0,1\}^n} \left[\sum_{\mathbf{x}\in\{0,1\}^n} (-1)^{\left(f_1(\mathbf{x})\oplus \mathbf{s}\right)\cdot \mathbf{y} \oplus \mathbf{x}\cdot \mathbf{z}}\right]\ket{\mathbf{z}}_n
$$
$$
= (-1)^{\mathbf{s}\cdot\mathbf{y}}\frac{1}{N}\sum_{\mathbf{z}\in\{0,1\}^n} \left[\sum_{\mathbf{x}\in\{0,1\}^n} (-1)^{f_1(\mathbf{x})\cdot \mathbf{y} \oplus \mathbf{x}\cdot \mathbf{z}}\right]\ket{\mathbf{z}}_n.
$$

And, as we can observe, we end up getting a quantum state equivalent to the one we would get by applying the $\GPK$ algorithm for the function $f_1$ which does not affect the probabilities of the final result. This is the reason behind the fact that what we get in the general case of the balanced situation in the Generalised Deutsch--Jozsa algorithm is the sum of the two possible values $\boldsymbol\lambda$, and why we must make an extra step to find both values.

\end{remark}

The other thing we want to point out has to do with the choice of markers.

\begin{remark}
In order to solve the generalised Deutsch--Jozsa problem we have computed $m$ applications of the $\GPK$ algorithm with the elements of the computational basis as markers. What we want to show now is that this choice of markers is not compulsory and that any basis of $\bs{m}$ would suffice.

Let $\mathbf{y}_1,\ldots,\mathbf{y}_m\in\bs{m}$ be any such basis, we will compute now the $\GPK$ algorithm for each of these markers. It becomes clear that if $f(\mathbf{x})\cdot\mathbf{y}_i$ is constant for all $\mathbf{x}\in\bs{n}$, then the result of the $i$-th iteration of the algorithm will be $\mathbf{0}$, while if $f(\mathbf{x})\cdot\mathbf{y}_i= 0$ for half of the values and $1$ for the other half, then the result will be any other binary string.

Let $\boldsymbol\lambda = \mathbf{f}_1\oplus\mathbf{f}_2$ be the sum of the two possible values of the function as before---if the function is constant we would have $\boldsymbol\lambda = \mathbf{0}$---then what we end up with is a system of equations:
$$
\{\mathbf{y}_i\cdot\boldsymbol\lambda = \delta_i\mid i = 1,\ldots, m\}.
$$

Where $\boldsymbol\lambda$ is the string of unknowns. This system is always made up of $m$ linearly independent equations, as the $\mathbf{y}_i$ are a basis of $\bs{m}$, so the sole solution will be the desired $\boldsymbol\lambda$.

\end{remark}

\end{section}

\begin{section}{A Bernstein--Vazirani inspired algorithm}

Once again we will put our focus on generalising an already known problem which was studied in \cite{byv}. Let us begin by recalling the Bernstein--Vazirani problem in the one-dimensional situation.

\begin{definition}{(Bernstein--Vazirani problem.)}
Let $f:\{0,1\}^n\to\{0,1\}$ be a function such that there is an $\mathbf{r} \in \{0,1\}^n$ for which $f(\mathbf{x}) = \mathbf{r} \cdot \mathbf{x}$, we want to find the binary string $\mathbf{r}$.

\end{definition}

Before analysing this problem, let us note that the condition stated in the Bernstein--Vazirani problem just asks for $f$ to be linear. This is relevant because in the generalisation of this problem we will consider a linear $\bff{n}{m}$ and ask to exactly determine it.

Regarding the complexity of this problem, we should note that a linear function $\bff{n}{}$ can be determined in $n$ calls to $f$, as we only have to calculate the image through $f$ of the elements of one basis of $\bs{n}$. In particular, we can calculate $f(\mathbf{e}_i)$ for each element in the canonical basis and the $i$-th element of $\mathbf{r}$ would be $r_i = f(\mathbf{e}_i)$.

We will show how we can solve this problem with a quantum algorithm making a single call to $\mathbf{U}_f$. The algorithm we will describe is exactly the same as we used to solve the Deutsch--Jozsa problem.

First, we will have two registers of $n$ and $1$ qubits respectively:
$$
\ket{\vp_{0}}_{n,1} = \ket{\mathbf{0}}_n\otimes\ket{1} 
$$

We can obtain the $\ket{1}$ in the second register by applying the $\mathbf{X}$ gate to the last qubit. Secondly, we will apply Hadamard gates to all the qubits in order to obtain the state:
$$
\ket{\vp_{1}}_{n,1} = \mathbf{H}_n\ket{\vp_0}_{n,1} = \left(\frac{1}{\sqrt{N}} \sum_{\mathbf{x}\in\bs{n}} \ket{\mathbf{x}}_n\right)\otimes\ket{-},
$$
where $N = 2^n$. This state is now ready to use the Phase Kick-Back technique by applying $\mathbf{U}_f$:
$$
\ket{\vp_{2}}_{n,1} = \mathbf{U}_f\ket{\vp_1}_{n,1} = \left(\frac{1}{\sqrt{N}} \sum_{\mathbf{x}\in\bs{n}} (-1)^{f(\mathbf{x})}\ket{\mathbf{x}}_n\right)\otimes\ket{-}.
$$

Using now that $f(\mathbf{x}) = \mathbf{r}\cdot \mathbf{x}$, we arrive at:
$$
\ket{\vp_{2}}_{n,1} = \left(\frac{1}{\sqrt{N}} \sum_{\mathbf{x}\in\bs{n}} (-1)^{\mathbf{r}\cdot\mathbf{x}}\ket{\mathbf{x}}_n\right)\otimes\ket{-}.
$$

Recalling the effect of $\mathbf{H}_n$ on the computational basis, we can easily check that the first register of this state is exactly $\mathbf{H}_n\ket{\mathbf{r}}$, so after applying $\mathbf{H}_n$ to the first register we obtain:
$$
\ket{\vp_{3}}_{n,1} = \left(\mathbf{H}_n\otimes \mathbf{I}\right)\ket{\vp_2}_{n,1} = \ket{\mathbf{r}}_n\otimes\ket{-}.
$$

Then, after measuring the first register we will obtain $\mathbf{r}$. 

Before considering the generalised problem, we will take the liberty to consider a slight modification to the Bernstein--Vazirani problem.

\begin{definition}{(Modified Bernstein--Vazirani problem.)}
Let $\bff{n}{}$ be a Boolean affine function---i.e., a Boolean function such that there are $\mathbf{r}\in\bs{n}$ and $r_0\in\bsp$ for which $f(\mathbf{x}) = r_0\oplus\mathbf{r}\cdot\mathbf{x}$ for all $\mathbf{x}$---then we want to exactly determine said function.

\end{definition}

What we will find out is that this problem can be solved by the previous algorithm with just a final step to determine $r_0$.

\begin{proposition}
The Bernstein--Vazirani algorithm solves the modified Bernstein--Vazirani problem with certainty with a final classical deterministic call to $f$ to determine $r_0$.

\end{proposition}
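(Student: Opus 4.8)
The plan is to run the Bernstein--Vazirani algorithm verbatim on the affine function $f(\mathbf{x}) = r_0\oplus\mathbf{r}\cdot\mathbf{x}$ and to observe that the only effect of the extra constant term $r_0$ is an unobservable global phase. Concretely, I would carry the computation through step by step exactly as in the linear case: start from $\ket{\vp_0}_{n,1} = \ket{\mathbf{0}}_n\otimes\ket{1}$, apply Hadamard gates to obtain $\left(\frac{1}{\sqrt{N}}\sum_{\mathbf{x}\in\bs{n}}\ket{\mathbf{x}}_n\right)\otimes\ket{-}$, and then invoke the Phase Kick-Back (Lemma~1.1) to reach
$$
\ket{\vp_2}_{n,1} = \left(\frac{1}{\sqrt{N}}\sum_{\mathbf{x}\in\bs{n}}(-1)^{f(\mathbf{x})}\ket{\mathbf{x}}_n\right)\otimes\ket{-},
$$
where $N = 2^n$.

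Next I would substitute $f(\mathbf{x}) = r_0\oplus\mathbf{r}\cdot\mathbf{x}$ and use $(-1)^{r_0\oplus\mathbf{r}\cdot\mathbf{x}} = (-1)^{r_0}(-1)^{\mathbf{r}\cdot\mathbf{x}}$ to pull the constant out of the sum:
$$
\ket{\vp_2}_{n,1} = (-1)^{r_0}\left(\frac{1}{\sqrt{N}}\sum_{\mathbf{x}\in\bs{n}}(-1)^{\mathbf{r}\cdot\mathbf{x}}\ket{\mathbf{x}}_n\right)\otimes\ket{-} = (-1)^{r_0}\Big(\mathbf{H}_n\ket{\mathbf{r}}_n\Big)\otimes\ket{-},
$$
recalling the effect of $\mathbf{H}_n$ on the computational basis. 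Applying $\mathbf{H}_n$ to the first register and using $\mathbf{H}_n^2 = \mathbf{I}$ then yields $\ket{\vp_3}_{n,1} = (-1)^{r_0}\,\ket{\mathbf{r}}_n\otimes\ket{-}$. Since $(-1)^{r_0}$ is a global phase, measuring the first register returns $\mathbf{r}$ with certainty, exactly as in the non-modified problem.

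Finally, to recover $r_0$ I would make one classical deterministic evaluation of $f$, for instance at $\mathbf{x} = \mathbf{0}$, which gives $f(\mathbf{0}) = r_0\oplus\mathbf{r}\cdot\mathbf{0} = r_0$; together with the measured $\mathbf{r}$ this determines $f$ completely. There is essentially no hard step here: the whole content is the elementary identity $(-1)^{r_0\oplus\mathbf{r}\cdot\mathbf{x}} = (-1)^{r_0}(-1)^{\mathbf{r}\cdot\mathbf{x}}$ together with the fact that a global phase does not affect measurement statistics. The only point requiring a word of care is to note that the factor $(-1)^{r_0}$ is independent of $\mathbf{x}$, so it really is global (it multiplies the entire state, including the $\ket{-}$ register) and cannot be mistaken for the relative phases that encode $\mathbf{r}$; this is the exact $m=1$ shadow of the ``$\GPK$ does not detect translations'' phenomenon noted earlier.
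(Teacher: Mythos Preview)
Your proof is correct and follows essentially the same route as the paper: factor out the global phase $(-1)^{r_0}$ from $\ket{\vp_2}_{n,1}$, observe that the remaining state is the one from the linear case so that measurement yields $\mathbf{r}$ with certainty, and then recover $r_0$ via the classical evaluation $f(\mathbf{0})=r_0$. Your write-up is slightly more explicit in carrying the Hadamard step through to $\ket{\vp_3}_{n,1}$ and naming the global-phase principle, but the argument is the same.
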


\begin{proof}

Following the previous exposition of the Bernstein--Vazirani algorithm, the only difference in this situation is that we would end up with the state:
$$
\ket{\vp_2}_{n,1} = \left(\frac{1}{\sqrt{N}} \sum_{\mathbf{x}\in\bs{n}} (-1)^{r_0\oplus\mathbf{r}\cdot\mathbf{x}}\ket{\mathbf{x}}_n\right)\otimes\ket{-} 
$$
$$
= (-1)^{r_0}\left(\frac{1}{\sqrt{N}} \sum_{\mathbf{x}\in\bs{n}} (-1)^{\mathbf{r}\cdot\mathbf{x}}\ket{\mathbf{x}}_n\right)\otimes\ket{-}.
$$

This is equivalent to the state we had in the previous situation, and thus we would end up getting $\mathbf{r}$ after measuring $\ket{\vp_3}_{n,1}$.

To get $r_0$, we must only classically calculate $f(\mathbf{0}) = r_0$.
\end{proof}

Again, we arrive at the same pattern, where the $\GPK$ cannot distinguish a translation in $f$, but only the linear structure it has. 

Let us now use this idea to generalise the Bernstein--Vazirani problem to an arbitrary dimension.

\begin{definition}{(Generalised Bernstein--Vazirani problem.)}
Let $\bff{n}{m}$ be an affine function, i.e., one such that there is an $m\times n$ matrix $R$ and an $\mathbf{r}_0\in\bs{m}$ for which $f(\mathbf{x}) = \mathbf{r}_0\oplus R\cdot\mathbf{x}$. The Generalised Bernstein--Vazirani problem is that of exactly determining $f$.

\end{definition}

\begin{remark}

Let us analyse the classical deterministic complexity of this problem. It is easy to prove that we can exactly determine $R$ by calculating $f(\mathbf{e}_i)$ for each element of the computational basis, as the binary string determined by the $i$-th file of $R$, $\mathbf{r}_i$, will be exactly $f(\mathbf{e}_i)\oplus \mathbf{r}_0$. We can finally calculate $\mathbf{r}_0$ by computing $f(\mathbf{0})$, so the total calls to $f$ will be $n+1$.

It can be seen that with the $\GPK$ we can do this with $m+1$ calls to the function, so in a way we will switch the roles of $\bs{n}$ and $\bs{m}$.

\end{remark}

We will now prove that we can solve the Generalised Bernstein--Vazirani problem by computing $m$ iterations of the $\GPK$ algorithm by each of the elements of the computational basis of $\bs{m}$ and a final classical computation of $f(\mathbf{0})$.

\begin{theorem}{(Correctness of the algorithm.)}
It is possible to exactly determine the matrix $R$ by computing $\GPK(\mathbf{e}_i)$ for each of the elements $\mathbf{e}_i$ of the computational basis of $\bs{m}$.

\end{theorem}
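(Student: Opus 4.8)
The plan is to run $\GPK(\mathbf{e}_i)$ and track the first register exactly as in the one-dimensional Bernstein--Vazirani analysis, the only new ingredient being the bookkeeping that relates the pairing $f(\mathbf{x})\cdot\mathbf{e}_i$ to the $i$-th row of $R$. First I would recall, via Lemma 2.1, that after Step 4 the state is
$$
\ket{\vp_3}_{n,m} = \left(\frac{1}{\sqrt{N}}\sum_{\mathbf{x}\in\bs{n}} (-1)^{f(\mathbf{x})\cdot\mathbf{e}_i}\ket{\mathbf{x}}_n\right)\otimes\ket{\gamma_{\mathbf{e}_i}}_m,
$$
with $N = 2^n$. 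Then I would substitute $f(\mathbf{x}) = \mathbf{r}_0\oplus R\cdot\mathbf{x}$ and expand the exponent using bilinearity of the pairing: $f(\mathbf{x})\cdot\mathbf{e}_i = (\mathbf{r}_0\cdot\mathbf{e}_i)\oplus\big((R\cdot\mathbf{x})\cdot\mathbf{e}_i\big)$. The key observation is that, by the definition of the matrix--vector product $R\cdot\mathbf{x}$ given in the opening remark, pairing its result against $\mathbf{e}_i$ extracts exactly its $i$-th component, namely $\mathbf{r}_i\cdot\mathbf{x}$ where $\mathbf{r}_i$ is the $i$-th row of $R$; likewise $\mathbf{r}_0\cdot\mathbf{e}_i$ is the $i$-th bit of $\mathbf{r}_0$, a constant I will call $c_i$.

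With this, the first register of $\ket{\vp_3}_{n,m}$ becomes $(-1)^{c_i}\frac{1}{\sqrt{N}}\sum_{\mathbf{x}\in\bs{n}}(-1)^{\mathbf{r}_i\cdot\mathbf{x}}\ket{\mathbf{x}}_n$, which is $(-1)^{c_i}\mathbf{H}_n\ket{\mathbf{r}_i}_n$ by the known action of $\mathbf{H}_n$ on the computational basis. Applying $\mathbf{H}_n$ in Step 5 and using $\mathbf{H}_n\mathbf{H}_n = \mathbf{I}^{\otimes n}$ yields $\ket{\vp_4}_{n,m} = (-1)^{c_i}\ket{\mathbf{r}_i}_n\otimes\ket{\gamma_{\mathbf{e}_i}}_m$, so measuring the first register in Step 6 returns $\delta_i = \mathbf{r}_i$ with certainty, the global sign $(-1)^{c_i}$ being physically irrelevant. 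Running this for $i = 0,\ldots,m-1$ recovers every row of $R$, hence $R$ itself.

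The argument is essentially a rerun of the one-dimensional Bernstein--Vazirani computation applied row by row, so there is no real obstacle; the one point that deserves care is the identity $(R\cdot\mathbf{x})\cdot\mathbf{e}_i = \mathbf{r}_i\cdot\mathbf{x}$, which must be justified from the definitions rather than taken for granted. It is also worth flagging explicitly that the unobservable phase $(-1)^{c_i}$ is precisely what prevents this step from also pinning down $\mathbf{r}_0$ --- consistent with the earlier remark that the $\GPK$ is blind to translations, and the reason the full solution of the Generalised Bernstein--Vazirani problem still needs the extra classical evaluation $f(\mathbf{0}) = \mathbf{r}_0$.
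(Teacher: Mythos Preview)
Your proposal is correct and follows essentially the same approach as the paper: both use the identity $f(\mathbf{x})\cdot\mathbf{e}_i = (\mathbf{r}_0)_i\oplus\mathbf{r}_i\cdot\mathbf{x}$ to reduce the $i$-th run of $\GPK$ to the one-dimensional Bernstein--Vazirani analysis, concluding that the first register collapses to $\ket{\mathbf{r}_i}_n$ up to an unobservable global phase coming from $(\mathbf{r}_0)_i$. The only cosmetic difference is that you recognise the first register of $\ket{\vp_3}$ as $(-1)^{c_i}\mathbf{H}_n\ket{\mathbf{r}_i}_n$ and then invoke $\mathbf{H}_n^2=\mathbf{I}^{\otimes n}$, whereas the paper expands $\ket{\vp_4}$ into the double sum over $\mathbf{x},\mathbf{z}$ and checks directly that the amplitude of $\ket{\mathbf{r}_i}_n$ has modulus one; these are the same computation presented in two orders.
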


\begin{proof}

We will only prove that the result of the algorithm $\GPK(\mathbf{e}_i)$ is the binary string that determines the $i$-th row of $R$, which is an $\mathbf{r}_i$ such that $f(\mathbf{x})_i = (\mathbf{r}_0)_i\oplus \mathbf{r}_i\cdot \mathbf{x}$.

Let us calculate the amplitude of $\mathbf{r}_i$ in the final state of the $\GPK$ algorithm using $\mathbf{e}_i$ as marker.
$$
\ket{\vp_4}_{n} = \frac{1}{N}\sum_{\mathbf{z}\in\{0,1\}^n} \left[\sum_{\mathbf{x}\in\{0,1\}^n} (-1)^{f(\mathbf{x})\cdot \mathbf{e}_i \oplus \mathbf{x}\cdot \mathbf{z}}\right]\ket{\mathbf{z}}_n.
$$

Therefore, the amplitude of $\mathbf{r}_i$ is:
$$
\frac{1}{N}\sum_{\mathbf{x}\in\{0,1\}^n} (-1)^{f(\mathbf{x})\cdot \mathbf{e}_i \oplus \mathbf{x}\cdot \mathbf{r}_i} = \frac{1}{N}\sum_{\mathbf{x}\in\{0,1\}^n} (-1)^{\left(\mathbf{r}_0\oplus \mathbf{r}_i\cdot \mathbf{x}\right)  \oplus \mathbf{x}\cdot \mathbf{r}_i}.
$$

As $f(\mathbf{x})\cdot \mathbf{e}_i = (\mathbf{r}_0)_i\oplus \mathbf{r}_i\cdot \mathbf{x}$. If we expand now the expression, we get:
$$
\frac{1}{N}\sum_{\mathbf{x}\in\{0,1\}^n} (-1)^{\left(\mathbf{r}_0\oplus \mathbf{r}_i\cdot \mathbf{x}\right)  \oplus \mathbf{x}\cdot \mathbf{r}_i} = (-1)^{\mathbf{r}_0} \frac{1}{N}\sum_{\mathbf{x}\in\{0,1\}^n} (-1)^{\mathbf{x} \cdot\left(\mathbf{r}_i \oplus \mathbf{r}_i\right)} = (-1)^{\mathbf{r}_0},
$$
and we are assured to get $\mathbf{r}_i$.

Once again, $\GPK$ only allows us to determine $R$, but tells us nothing about the translation $\mathbf{r}_0$, which we have to classically determine by computing $f(\mathbf{0})$.
\end{proof}

\begin{remark}
The choice of computing the $\GPK$ algorithm with the elements of the computational basis is actually arbitrary, if we chose to do so with any other basis, we would end up getting the matrix of the linear application in said basis.
\end{remark}

Again, we see that the $\GPK$ shines the most when applied to functions with a certain linear structure. 

\end{section}

\begin{section}{Conclusion and further research}

In this paper we have presented a generalisation of a core technique in quantum computing, which has allowed us to generalise two classical problems in the field. This is important for two reasons.

Firstly, these new generalised problems and theirs solutions allow us to reach further into the understanding of the possibilities and limitations of quantum computing, giving us a deeper look into some of the problems that constitute the foundation of quantum algorithms.

Secondly, this technique may be used for solving some other problems, so we are effectively enlarging the arsenal at our disposal when faced with the uncertainty and unfamiliarity of quantum algorithms.

In future work, we hope to use this technique to target some new and old problems, further exploring the idea of balanced functions in the multidimensional situation.

\end{section}

\section*{Author contributions}

All authors have contributed equally to the work.

\section*{Acknowledgements}

This work was supported by the \emph{Ministerio de Ciencia e Innovaci\'on} under Project PID2020-114613GB-I00 (MCIN/AEI/10.13039/501100011033) and by the \emph{Junta de Andaluc\'ia} and \emph{ERDF} under Project P20-01056.

\section*{Competing interests}

The authors report that there are no competing interests to declare.

\bibliographystyle{nature}
\bibliography{nature}

\end{document}